\newcommand{\QED}{$\square$ \smallskip}
\newcommand{\Proof}{\noindent\textit{Proof Sketch. }}
\newcommand{\bt}{\mathcal{T}}
\newcommand{\lldots}{,\ldots ,}
\newcommand{\nil}{\langle \ \rangle}
\newcommand{\wwedge}{\ \wedge \ }
\newcommand{\bSearch}{{\tt bSearch}}
\newcommand{\Rec}{{\tt Rec}}
\newcommand{\rev}{\text{rev}}
\newcommand{\Cond}{{\tt Cond}}
\newcommand{\error}{{\tt fault}}
\newcommand{\lst}[1]{\langle \ #1 \ \rangle}
\newcommand{\conc}{{\tt conc}}
\newcommand{\cons}{{\tt cons}}
\newcommand{\head}{{\tt head}}
\newcommand{\tail}{{\tt tail}}
\newcommand{\EXP}[2]{{{\tt{#1}}{\tt{exp}}({#2})}}
\renewcommand{\c}{{\mathfrak c}}
\newcommand{\q}{{\tt q}}
\newcommand{\qo}{{\tt q_0}}
\renewcommand{\b}{{\tt b}}
\begin{document}
\title{On Termination of Transactions over Semantic\newline Document Models\thanks{The research was supported by the Russian Science Foundation (Grant
No. 17-11-01176)}}

\author{Andrei Mantsivoda\inst{1,3,4} \and Denis Ponomaryov\inst{2,3,4}}

\institute{Irkutsk State University \and Ershov Institute of Informatics Systems \and Sobolev Institute of Mathematics\newline \and Novosibirsk State University}

\maketitle

\begin{abstract}
We consider the framework of Document Modeling, which lays the formal basis for representing the document lifecycle in Business Process Management systems. We formulate document models in the scope of the logic-based Semantic Modeling language and study the question whether transactions given by a document model terminate on any input. We show that in general this problem is undecidable and formulate sufficient conditions, which guarantee decidability and polynomial boundedness of effects of transactions.    
\end{abstract}

\keywords{Semantic Modeling, document model, transactions, chase}

\section{Introduction}\label{Sect:Intro}
In \cite{docmodels} a Document Modeling approach has been proposed as a fundamental basis for document processing in Business Process Management Systems (BPMS). Importantly, within this approach basic entities and primitives have been identified, which are common to BPMS such as Enterprise Resource Planning Systems, Customer Relationship Management Systems, etc. The approach rests on the natural idea that document lifecycle lies at the core of these systems. Typically, there is a static part, which describes the forms and statuses of documents (i.e., a schema), and a dynamic part, which describes changes in documents (i.e., transactions over them). In contrast to conventional architectures of BPMS, the approach of the Document Modeling shows that both parts can be given in a fully declarative fashion, thus making programming unnecessary. It suffices to describe the static part of a document model by giving a specification to document forms and fields, and to describe the dynamic part by defining transactions, their conditions, and effects. Then, given an initial state of a document model (a collection of documents), the natural problem is to compute a state (an updated collection of documents), which results from the execution of a sequence of transactions. It is argued within the Document Modeling approach that this problem can be solved with the tools of formal logic such as automated inference or model checking.

In \cite{MantsivodaPonomaryov}, the ideas of the Document Modeling have been implemented in a logical framework in terms of the language of the Semantic Programming (aka Semantic Modeling) \cite{bib41}. It has been shown that the approach of the Document Modeling implemented this way goes beyond the common capabilities of today's Business Process Management Systems. In particular, it allows for checking document models for consistency and solving important problems like projection (e.g., what documents will be created after an accountant performs certain actions) and planning (e.g., what actions must be made in order to get an item on stock). The method follows the same line with some of the well-known approaches like Situation Calculus \cite{ReiterBook2} and similar formalisms, but it addresses the topic of Business Process Management, which is a novel area of application for logic-based formalisms.

Obviously, an important question is how hard the above mentioned problems are from the computational point of view. In this respect, the key problem is computing effects of transactions over a document model. Transactions can be fired due to an input of an oracle (a user or an algorithm, which provides some input to a document model), which in turn, can cause other transactions to fire, and so on. Thus potentially, this can result in an infinite chain of updates of a document model, under which a finite resulting state is never obtained. We consider this problem in the paper and formulate a number of complexity results, which demonstrate the expressiveness of document models. 

The contributions of this work are as follows. We refine the formalization of the Document Modeling given in \cite{MantsivodaPonomaryov} and provide a more succinct formalization in an extension of the language of the Semantic Modelling with (non-standard) looping terms. We formulate the problem of transaction termination over document models and show that in general it is undecidable. 
Then we describe a sufficient condition, which guarantees decidability. For this we introduce a formal definition of a locally simple document theory (the notion previously discussed in \cite{docmodels2}) and we show that over any such theory transaction termination is decidable. Then we estimate the complexity of computing effects of transactions and identify a case when they are polynomially bounded. 

\section{Preliminaries}\label{Sect:Preliminaries}
Document Modeling follows the idea of declarative representation of documents and transactions over them. A document model consists of a description of fields, which can appear in documents (their cardinality and default values), a definition of document forms (given as collections of fields), and a definition of so called \textit{daemons}, which specify conditions and effects of transactions and field triggers. Transactions can be fired on an input of a user or an external procedure (e.g., a Machine Learning algorithm like in \cite{Vityaev}), or they can be fired by other transactions. Field triggers can be viewed as a special kind of transactions, but they can fire only in the event of changing a value of some document field. 

The formalism of the Document Modeling includes at least three ingredients that can influence the complexity of computation. The first one is the set of operators over field values. In real-world applications of the Document Modeling, the language is restricted to basic arithmetic operations (like, summation, subtraction, etc.), which can be computed efficiently. For this reason, we do not consider the whole variety of operators over field values in the paper. We describe only basic operations and examples of their implementation in order to show that they make no contribution to the complexity of computing effects of transactions. The second ingredient is the query language used in the Document Modelling to describe collections of documents, which have certain properties. Transactions can refer to document collections given by queries and hence, the complexity of the query language influences the complexity of computing effects of transactions. We leave this effect out of the scope of this paper and focus on the complexity of transactions caused solely by their relationships to each other. For this, we adopt a simple query language implemented by predefined document filters, which can be used in the definition of transactions and are computationally simple. In the remaining part of this section, we introduce basics of the Semantic Modeling and  conventions used in this paper. We refer an interested reader to \cite{bib41}-\cite{bib32} for details on the Semantic Modeling. 

\subsection{Basics of the Semantic Modeling}\label{Sect:BasicsSemModeling}
The language of the Semantic Modeling is a first-order language with sorts `urelement' and `list' , in which only bounded quantification of the following form is allowed:
\begin{itemize}
\item a restriction onto the list elements $\forall x\in t$ and $\exists x\in t$;
\item a restriction onto the initial segments of lists $\forall x\sqsubseteq t$ and $\exists x\sqsubseteq t$.
\end{itemize}
where $t$ is a list term. A list term is defined inductively via constant lists, variables of sort `list', and list functions given below.
A \textit{constant list} (which can be nested) is built over constants of sort `urelement' and a constant $\nil$ of sort `list', which represents the empty list. The \textit{list functions} are:
\begin{itemize}
\item $\head$ -- the last element of a non-empty list and $\nil$, otherwise; 

\item $\tail$ -- the list without the last element, for a non-empty list, \newline\phantom{aaaaaa} and $\nil$, otherwise; 

\item $\cons$ -- the list obtained by adding a new last element to a list; 

\item $\conc$ -- concatenation of two lists; 
\end{itemize}
\smallskip

Terms of sort `urelement' are standard first-order terms. The predicates $\in, \sqsubseteq$ are allowed to appear in $\Delta_0$-formulas without any restrictions, i.e., they can be used in bounded quantifiers and atomic formulas. 

\medskip

Formulas in the language above are interpreted over hereditarily finite \textit{list superstructures} $HW(\mathcal{M})$, where $\mathcal{M}$ is a structure. Urelements are interpreted as distinct elements of the domain of $\mathcal{M}$ and lists are interpreted as lists over urelements and the distinguished `empty list' $\nil$. In particular, the following equations hold in every $HW(\mathcal{M})$ (the free variables below are assumed to be universally quantified):
{\footnotesize
\begin{equation*}
\begin{aligned}
& \neg \exists x \ x\in \nil \\
& \cons(x,y) = \cons(x',y') \rightarrow x=x' \wedge y=y' \\
& \tail(\cons(x,y)) = x, \ \ \head(\cons(x,y)) = y \\
& \tail(\nil) = \nil, \ \ \head(\nil) = \nil \\
& \conc(\nil, x) = \conc(x, \nil) = x \\
& \cons(\conc(x,y), z) = \conc(x, \cons(y,z)) \\
& \conc(\conc(x,y), z) = \conc(x, \conc(y,z))
\end{aligned}
\end{equation*}
}
It was shown in \cite{OspichevPonomaryov} that for any appropriate structure $\mathcal{M}$, there exists a representation of its superstructure of finite lists $HW(\mathcal{M})$, in which the value of any variable-free list term $t$ can be computed in time polynomial in the size of $t$ (given as as string). Throughout the text,  we omit subtleties related to the representation of hereditarily finite structures and we assume that for any variable-free list term $t$ one can compute a constant list $t'$ in time polynomial in the size of $t$ such that $HW(\mathcal{M})\models t = t'$, for any structure $HW(\mathcal{M})$ under consideration. For list terms $t_1\lldots t_n$, $n\geqslant 1$, we will use $\lst{t_1\lldots t_n}$ as a shortcut for the term $\cons(\cons(\cons(\nil, t_1), t_2)\ldots , t_n)\ldots )$. For a list $s$, the notation $|s|$ stands for the number of elements in $s$.

In \cite{condterms,Doklady,Recursive}, the basic language of the Semantic Modeling was extended with non-standard list terms, which represent conditional operators (they correspond to the common `if-then-else' or `switch' constructs of programming languages), bounded list search, and bounded recursion (similar to the restricted `while' operator). We refer to the obtained language as $\mathcal{L}$. The non-standard terms in $\mathcal{L}$ are called $\Cond$-, $\bSearch$- and $\Rec$-terms, respectively, and are defined as follows. By default any standard term in the language of the Semantic Modeling is a $\mathcal{L}$-term and any formula of the language of the Semantic Modeling is a $\mathcal{L}$-formula.

If $t$ and $\theta(\overline{v},x)$ is a $\mathcal{L}$-term of sort list and $\mathcal{L}$-formula, respectively, then the expression $\bSearch(\theta, t)(\overline{v})$ is a \textit{$\bSearch$-term}. It is equal to the last element $a$ of $t(\overline{v})$ such that $\theta(\overline{v}, a)$ holds and it is equal to $t(\overline{v})$, otherwise (i.e., if there is no such $a$). 

If $\theta_0\lldots \theta_n$ are $\mathcal{L}$-formulas and $q_1\lldots q_{n+1}$ are $\mathcal{L}$-terms, where $n\geqslant 0$, then the term
$\Cond[\theta_1, q_1][\theta_2, q_2]\ldots[\theta_n, q_n][q_{n+1}](\overline{v})$ is a \textit{$\Cond$-term} term with the following interpretation:
{\footnotesize 
\begin{center}
$t(\overline{v})=\begin{cases} q_1(\overline{v}) &\mbox{if
} \theta_1(\overline{v}) \\ q_2(\overline{v}) &\mbox{if }
\theta_2(\overline{v})\land\neg\theta_1(\overline{v}) \\ \ldots \\
 q_{n}(\overline{v}) &\mbox{if }
\theta_n(\overline{v})\land\neg\theta_1(\overline{v})\land\neg\theta_2(\overline{v})\land\ldots\land\neg\theta_{n-1}(\overline{v})
\\q_{n+1}(\overline{v}) &\mbox{if } \neg\theta_1(\overline{v})\land\neg\theta_2(\overline{v})\land\ldots\land\neg\theta_{n}(\overline{v})
\end{cases}$
\end{center}
}
Finally, if $f(\overline{v}), h(\overline{v},y,z)$ and $t(\overline{v})$ are $\mathcal{L}$-terms of sort list then the expression $\Rec[f,$ $h,t](\overline{v})$ is a \textit{$\Rec$-term} and its value is given by $g(\overline{v},t)$ with the following definition:
\begin{itemize}
	\item $g(\overline{v},\nil)=f(\overline{v})$
	\item $g(\overline{v},cons(\alpha,b))\!=\!h(\overline{v},g(\alpha),b)$, for any lists $\alpha,b$ such that $cons(\alpha,b)\sqsubseteq t$
\end{itemize}	

\smallskip

In this paper, we refine the formalization of the Document Modeling from \cite{MantsivodaPonomaryov} in the language of the Semantic Modeling extended with the above mentioned non-standard list terms. In particular, we obtain a more succinct formalization in comparison with \cite{MantsivodaPonomaryov}. Further in Section \ref{Sect:DocumentTheories}, we will introduce document theories, which formalize the key ingredients of the Document Modeling approach, and in the next section we describe conventions used in our formalization.

\subsection{Conventions in Formalization of Document Theories}
We use the following notions and informal conventions:

\begin{itemize}
\item There are pairwise disjoint finite sets $FieldNames$, $FormNames$,\newline $FilterNames$, and $TransNames$ of constants of sort \textit{urelement}, which provide document field, form, and filter names, and transaction names, respectively, which can be used in the axioms of a document theory. \smallskip


\item Natural numbers are modelled in a straightforward way as lists consisting of $n$ empty lists, for $n\geqslant 0$, and $0$ is represented by the empty list $\nil$ (we also show how to model real numbers in a decimal representation with a given precision). \smallskip

\item An \emph{instruction} is given as a list of the form $\lst{formName, CreateDoc}$ (in which case it is called \textit{CreateDoc-instruction}) or $\lst{value, fieldName,$ $docID, SetField}$ (a \textit{SetField-instruction}), or $\lst{params, docID,$ $transName}$ (a \textit{transaction}),  where $formName\in$ $FormNames$, $fieldName$ $\in Field\-Names$, $transName\in TransNames$, $docID$ represents a natural number, and $value$, $params$ are some lists, which specify a field value and transaction parameters, respectively. \smallskip

\item A \emph{queue} is a list of instructions to be executed. A queue is updated by daemons, which implement actions on the events such as changing a field value in a given document or executing a transaction. Creating a new document triggers no events. \smallskip

\item A \emph{situation} is a list of instructions, which represents the history of executed instructions. The last executed instruction appears first in a situation.  \smallskip
 
\item A \emph{field} is given as a list, with the head being an element of $FieldNames$ and the tail being a list, which represents a value for a field. Every field has a default value it gets when a new document is created.  \smallskip

\item A \emph{document} is a list of fields (the order of fields in the list is arbitrary). \smallskip

\item A (document) \emph{model} is a list consisting of tuples $\lst{sit, form, doc, ID}$, where $ID$ corresponds to a natural number, $doc$ is a document, $form\in FormNames$, and $sit$ a situation. A model stores a version of each document in each situation which has ever taken place. The head of this list is a tuple, in which the situation is the current one, i.e., it consists of instructions (a history) that have given the model. 
\end{itemize}
\smallskip

Situations represent contexts, in which documents are created or modified, and this information can be used in querying a document model. We note that this feature is irrelevant for the results in this paper, but we prefer to keep situations to comply with the original formalization of document models from \cite{MantsivodaPonomaryov}.

A \textit{document theory} consists of axioms, which specify document fields, forms, filters (i.e., the  static structure of documents and query templates), and axioms for the dynamic part. The latter is given by so called \emph{daemons} (similar to the notion used in process programming), which specify the instructions that must be executed whenever certain event happens (i.e., whenever a value of a specific field in a document is changed or a certain transaction is fired).  
Although formally we distinguish between CreateDoc-, SetField-instructions and transactions, we make no terminological difference between them when talking about the transaction termination problem. The results on computing effects of transactions refer to the instructions of the form above as well.

\section{Document Theories}\label{Sect:DocumentTheories}
We define a document theory $\bt$ as a theory in signature $\Sigma$, where $\Sigma$ consists of the list functions introduced in Section \ref{Sect:BasicsSemModeling} and the predicate and function symbols introduced in the axioms below. In particular, $\Sigma$ contains pairwise disjoint finite subsets of constants $FieldNames$, $FormNames$, $FilterNames$, and $TransNames$, which specify field, form, filter, and transaction names, which can be used in the axioms of $\bt$. The set $FormNames$ is supposed to be non-empty. Besides, $\Sigma$ contains distinguished constants $CreateDoc$ and $SetField$, $ExecTrans$, $\error$, which are used to represent instructions, and \textit{fault} (analogous to exception in programming languages). 

We formulate the axioms of $\bt$ in the language of the Semantic Modeling with non-standard terms. Initially, this language contains only two sorts: \textit{urelement} and \textit{list}. For convenience, we will assume that there is also a subsort $Real$ of the sort \textit{list}, which corresponds to (non-negative) real numbers with a given precision (denoted further as $prec$). In the following subsection, we define the sort $Real$, together with the corresponding predicates and functions, and we show how basic arithmetic operations can be implemented via list terms. In general, there are many such implementations possible, so the next subsection is best viewed as a number of introductory examples to the language of the Semantic Modeling. The only important observation is that the proposed implementation is tractable, as stated by Lemma \ref{Lem:TractabilityValueTerms} in Section \ref{Sect:DocumentValueTerms}. Throughout the text we assume that all the free variables in formulas are universally quantified.

\subsection{Numeric Terms and Predicates}

Let us define $Nat(x) \equiv \forall t \in x \ t = \nil$. For a natural number $n\in\omega$, denote by $\bar{n}$ the list consisting of $n$ empty lists.
Given $prec\in\omega$, $prec\geqslant 1$, we define a subsort $Real$ of the sort \textit{list} as follows:
{\footnotesize $$Real(x)\  \equiv \ len(x)=\overline{prec} \ \wedge \ \forall t\in x \  Nat(t) \wedge len(t)\sqsubseteq \overline{9} $$} where $len(x)$ is an abbreviation for the term $Rec[\nil, \cons(g(\alpha), \nil), x](x)$, i.e., $len(x)$ gives the number of elements in a list $x$.
In other words, we assume that a list of sort $Real$ corresponds to the decimal representation of a real number using $prec$-many digits, for a fixed number $prec\in\omega$.  


For lists $x, i$, let $x.i$ be a shortcut for the term
{\footnotesize $$\Cond[\neg Nat(i) \vee \neg (i\sqsubseteq len(x)), \ \error] \ [\ \Rec[\nil, b, i] \ ]$$}
i.e., it gives the constant list $\error$ if $i$ does not correspond to a natural number or $i$ is greater than the number of elements in $x$. Otherwise it gives the $i$-th element of $x$.

For lists $x,y$, let $x < y$ be the conjunction of $Real(x) \wedge Real(y)$ with
{\footnotesize 
\begin{multline*}
\exists i\sqsubseteq \overline{prec} \ (\ x.i \sqsubseteq y.i \wedge x.i\neq y.i \ \wedge 
\forall j\sqsubseteq \overline{prec} \ (i \sqsubseteq j \rightarrow x.j = y.j) \ )
\end{multline*}
}
i.e., we assume that the first digit of a real number given by a list $x$ is $\head(x)$.
The corresponding predicate $x\leqslant y$ is defined similarly. \smallskip

For a list $t$, let $min(t)$ be a notation for the term
{\footnotesize $$\Cond[t=\nil \vee \exists s\in t (\neg Real(s)), \ \error]\ [\ \Rec[\ \head(t), \ \Cond[b<g(\alpha), b][g(\alpha)], \ t \ ]\ ]$$}
The term $\max(t)$ is defined similarly. \smallskip

Finally, for lists $x,y$, let $x+y$ be a shortcut for the term 
{\footnotesize $$\Cond[\neg(Real(x)\wedge Real(y)) \vee \tail(sum)=\overline{1}, \ \error ]\ [\head(sum)]$$} 
where \phantom{a}
{\footnotesize $sum \equiv \Rec[\lst{\nil,\nil}, \ \cons(\ \tail(s), \cons(\head(g(\alpha)), \head(s)) \ ), \ \overline{prec}],$ }
\begin{center}
{\footnotesize $s \equiv sumnat(\ \conc( \ x.\cons(\alpha,b), \tail(g(\alpha)) \ ), \ y.\cons(\alpha,b) \ ),$ } 
\end{center}
{\footnotesize 
\begin{equation*}
\begin{aligned}
  sumnat(x,y) \equiv \Cond[\overline{10} \leqslant \conc(x,y), \ \cons( \ \overline{1}, mod10(\conc(x,y)) \ ) ][\cons(\overline{0}, \conc(x,y))]
\end{aligned}
\end{equation*}
}
and \phantom{aa} {\footnotesize $mod10(x)  \equiv  \head( \ \Rec[\lst{\nil, \nil},  
\Cond[\tail(g(\alpha))=\overline{10},$}

{\footnotesize $\cons(\tail(g(\alpha)), \cons(\head(g(\alpha)), b))] \ [\cons(\ \cons(\tail(g(\alpha)), b), \head(g(\alpha)) \ ) ], x] \ )$}.

\medskip

We note that negative reals and other arithmetic operations, e.g., subtraction, multiplication, etc., can be defined in a similar fashion. 

\medskip

Let $prec=k+m$, where $k,m$ are some constants, which give the length of the integer/fractional part of real numbers, respectively. For a (non-negative) real number $n$, let $dec(n)$ be the decimal representation of $n$ such that the number of digits in the integer and fractional part of $dec(n)$ is exactly $k$ and $m$, respectively. This is achieved by using auxiliary zeros, e.g., for $n=3/2$ and $k=m=2$, we have $dec(n)=01.50$. If $dec(n)$ exists, let $List(n)$ be the list representation of $dec(n)$, i.e., the list such that $len(List(n))=\overline{prec}$ and for all $i\in\{1\lldots prec\}$ and $j\in\omega$, it holds $List(n).\overline{i}=\overline{j}$ iff $j$ is the $(prec+1-i)$-th digit in $dec(n)$. 

The following lemma sums up the properties of the given formalization:

\begin{lemma}[Implementation of Arithmetic with Precision]\label{Lem:Arithmetic} 
Let $HW(\mathcal{M})$ be a list superstructure and $prec\in\omega$ a precision. For any (non-negative) real numbers $a_i$ such that $dec(a_i)$ exists, for $i=1\lldots n$ and $n\geqslant 3$:
\begin{itemize}
\item $dec(a_1)\!\propto\!dec(a_2)$ iff $HW(\mathcal{M})\models List(a_1)\propto List(a_2)$, for $\propto\in\{<, = \}$
\item $dec(a_1) + dec(a_2) = dec(a_3)$ iff $HW(\mathcal{M})\models List(a_1)+List(a_2)=List(a_3)$ 
\item $dec(dec(a_1) + dec(a_2))$ does not exist iff $HW(\mathcal{M})\models List(a_1)+List(a_2)=\error$
\end{itemize}
For $n\geqslant 1$, the value of $min(\lst{List(a_1)\lldots List(a_n)})$ or $max(\lst{List(a_1),$  $\ldots , List(a_n)})$ in $HW(\mathcal{M})$ is $List(a)$ iff $a$ is minimal/maximal among $dec(a_1),$ $\ldots , dec(a_n)$, respectively.
\end{lemma}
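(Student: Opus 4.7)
The plan is to unfold each non-standard term ($\Cond$, $\Rec$) appearing in the definitions of $Real$, $<$, $+$, $min$, and $max$ and verify that each computes the intended function by appealing to the semantics of these constructs given in the Preliminaries. Throughout I rely on two bookkeeping facts: under the unary encoding of naturals, $k\sqsubseteq m$ coincides with $k\leqslant m$; and for $x=List(a)$ the expression $x.\overline{i}$ evaluates to $\overline{j}$, where $j$ is the $(prec{+}1{-}i)$-th digit of $dec(a)$, so that the most significant digit sits at the largest index and equals $\head(x)$.

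For the comparison clause, unfolding the matrix of $x<y$ shows that the existential on $i$ picks the largest index at which $x$ and $y$ disagree, while $x.i\sqsubseteq y.i \wedge x.i\neq y.i$ expresses strict inequality of the corresponding digits viewed as naturals. Because the highest index corresponds to the most significant digit of $dec(a)$, this is precisely lexicographic ordering from most significant digit downwards; the equality case is the conjunction of per-position equalities. Together these yield the first bullet of the lemma.

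The main technical step is the correctness of $+$. The term $sum$ is a $\Rec$-term over $\overline{prec}$ whose body maintains a two-element list whose $\head$ stores the sum digits produced so far and whose $\tail$ stores the running carry. I prove by induction on the iteration count $k\leqslant prec$ the invariant that after $k$ steps $g(\overline{k})$ is such a list in which the $\head$ holds the $k$ least-significant digits of the decimal sum of $List(a_1)$ and $List(a_2)$ (in order from least to most significant) and the $\tail$ is $\overline{0}$ or $\overline{1}$ according to the current carry. The step case reduces to correctness of $sumnat$, which in turn reduces to correctness of $mod10$: for a unary natural $n<20$, $mod10(n)$ walks through $n$ and, once the accumulator reaches $\overline{10}$, discharges it, thereby returning $n \bmod 10$ together with a Boolean carry bit. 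Instantiating the invariant at $k=prec$ yields $\head(sum)=List(a_3)$ exactly when $dec(a_3)=dec(a_1)+dec(a_2)$, and $\tail(sum)=\overline{1}$ exactly when the true decimal sum does not fit into $prec$ digits; the outer $\Cond$-term of $x+y$ then returns $List(a_3)$ or $\error$ in the respective cases, establishing the second and third bullets.

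For $min$ and $max$, a direct induction on the length of $t$ using correctness of $<$ suffices: the $\Rec$-body $\Cond[b<g(\alpha),b][g(\alpha)]$ preserves the invariant that $g(\alpha)$ is the minimum (respectively, the maximum) of the elements already traversed. The outer guard of the $\Cond$-term falls through to $\error$ only when $t=\nil$ or some element of $t$ fails $Real$, both of which are excluded by the hypothesis that every $dec(a_i)$ exists. The principal obstacle will be formulating and propagating the invariant for $sum$ faithfully through the nested $\Cond$-terms inside $sumnat$ and $mod10$, in particular keeping the interleaving between the accumulated digit list in the $\head$ of $sum$ and the single-digit carry in the $\tail$ correctly aligned at each iteration.
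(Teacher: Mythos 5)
The paper states Lemma~\ref{Lem:Arithmetic} without any proof --- it is offered as a summary of the intended semantics of the numeric terms --- so there is no argument of the authors' to compare yours against. Judged on its own, your plan is the right one and is essentially complete as a sketch: you correctly fix the representation conventions (most significant digit at the largest index, i.e.\ at $\head(x)$, with $x.\overline{i}$ holding the $(prec+1-i)$-th digit of $dec(a)$), reduce the comparison clause to digit-wise lexicographic ordering from the top index down, and identify the one genuinely technical point, namely the loop invariant for the $\Rec$-term $sum$: after $k$ iterations the accumulator is $\cons(C,D)$ with $\tail$ the carry $C\in\{\overline{0},\overline{1}\}$ and $\head$ the list $D$ of the $k$ least significant digits of the true decimal sum. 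Instantiating at $k=prec$ does give both the second bullet and the overflow characterization in the third, and the $min$/$max$ clause follows by the running-minimum invariant exactly as you say. Two small caveats. First, the carry bit is not produced by $mod10$ (which returns only the remainder, via the outer $\head$); it comes from the guard $\overline{10}\leqslant\conc(x,y)$ in $sumnat$, so your invariant for $mod10$ should be stated as ``returns $n\bmod 10$ for $n<20$'' and the carry handled by that guard. Second, several of the paper's definitions must be read charitably for your unfolding to go through at all: the $\Rec$-term in $x.i$ as printed never references $x$, and the clause $\forall j\sqsubseteq\overline{prec}\,(i\sqsubseteq j\rightarrow x.j=y.j)$ in the definition of $<$ literally includes $j=i$ and so contradicts $x.i\neq y.i$; you implicitly repair both to their evident intent (indexing into $x$, and agreement strictly above position $i$). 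Making those repairs explicit would be worth a sentence in a full write-up, but they are defects of the definitions, not gaps in your argument.
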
 


\subsection{Document Terms}\label{Sect:DocumentValueTerms}
Let us introduce notations for terms, which are used to access documents and field values in a document model.

The following term gives the last used ID for a document in a model:
{\footnotesize 
$$GetLastDocID(model) \equiv max(\cons(\Rec[\nil, \cons(g(\alpha),\head(b)), model]), \ \overline{0})$$
}
i.e., it implements a search for the greatest value occurring as the head of a tuple from $model$ and outputs $\overline{0}$ if there are no documents in the model.

The next term gives the last version of a document (from a model) by its ID. It implements search for the last tuple with a given ID (contained in a model) and outputs the found document. If no tuple with the given ID is present in the model, the term gives \error.
{\footnotesize 
    $$GetDocByID(docID, model) \equiv \Cond[{\tt doctuple}=model, \ \error]    [{\tt doctuple}]$$
}
where {\footnotesize ${\tt doctuple}=\bSearch[\head(x)=docID, \ model]$}. 

The next term provides a field value from the last version of a document with a given ID: 
{\footnotesize 
\begin{multline*}
    GetFieldValue(docID, fieldName, model) \equiv \\ 
    \Cond[{\tt document}=\error, \error]
    [\ \tail(\bSearch[\head(x)=fieldName, {\tt document}])\ ]  
\end{multline*}
}
where {\footnotesize ${\tt document} = \head(\tail(GetDocByID(docID, model)))]$}.
\smallskip

Finally, we define the term $FindFieldPosition$, which ``splits'' a given document into a partitioned one (denoted as $pdocument$ below), which has the form $\lst{list_1, list_2}$ such that $conc(list_1,list_2)=document$ and $head(list_1)$ is a field with the required name (if there exists one in a document). This auxiliary term is employed in the axioms of a document theory to implement change of a field value in an existing document:
{\footnotesize 
\begin{multline*}
    FindFieldPosition(document, fieldName) \equiv \\
    \Cond[\tail({\tt pdocument})=\nil, \error][{\tt pdocument}]
\end{multline*}
}
where 
{\footnotesize 
\begin{multline*}
{\tt pdocument} = \Rec[\ \nil, \\
\Cond[\  \head(\tail(g(\alpha)))=fieldName,
 \lst{\tail(g(\alpha)), \cons(\head(g(\alpha)), b)} \ ] \\
 [\ \lst{\cons(\tail(g(\alpha)), b), \nil} \ ] , \ document \ ]  
\end{multline*}
}

Now we define by induction the notion of \textit{document term}, which generalizes the definitions above.

\begin{definition}[Document Term]
Any standard list term (i.e., which does not contain \Cond-, \bSearch, or \Rec-terms) is a document term. If $s,t,u,i$ are document terms then $s.i$, $s+t$, $min(s)$, $max(s)$, $GetLastDocID(s)$,  $GetDocByID(s,t)$, and $GetFieldValue(s,t,u)$ are document terms. 

The definition of document term is complete.
\end{definition}

An important property is that these terms are computationally tractable as stated in the following lemma.

\begin{lemma}[Tractability of Document Terms]\label{Lem:TractabilityValueTerms}
For any $prec\in\omega$, document terms $s(\overline{u})$, $t(\overline{v})$, and vectors of constant lists $\overline{a}$, $\overline{b}$: 
\begin{itemize}
\item a constant list $c$ such that $HW(\mathcal{M})\models s(\overline{a})=c$, for any list superstructure $HW(\mathcal{M})$ (which contains all the urelements from $s,t,\overline{a},\overline{b}$), can be computed in time polynomial in the size of $s(\overline{a})$ and $\overline{prec}$;
\item it can be decided in time polynomial in the size of $s(\overline{a})$,  $t(\overline{b})$, and $\overline{prec}$ whether $s(\overline{a})\propto t(\overline{b})$, for $\propto\in\{<,=\}$, holds in any structure as above.
\end{itemize}
\end{lemma}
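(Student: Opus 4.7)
The plan is to prove the first claim by structural induction on the document term $s$, simultaneously maintaining two invariants: the constant list $c$ resulting from evaluating $s(\overline{a})$ on any appropriate $HW(\mathcal{M})$ can be produced, and both $|c|$ and the running time are bounded by a polynomial in the size of $s(\overline{a})$ and $\overline{prec}$. The second claim then follows at once: I evaluate $s(\overline{a})$ and $t(\overline{b})$ to constant lists $c_1, c_2$ in polynomial time and test $c_1 = c_2$ or $c_1 < c_2$ directly; in the Real-comparison case the latter is reduced via Lemma~\ref{Lem:Arithmetic} to a test computable in time polynomial in $|c_1| + |c_2| + \overline{prec}$.

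For the base case $s$ is a standard list term, so after substituting the constant lists $\overline{a}$ it becomes a variable-free list term of linear size, and the result of \cite{OspichevPonomaryov} cited in Section~\ref{Sect:BasicsSemModeling} yields a polynomial-time procedure for evaluating it. For the arithmetic and order constructors $s+t$, $s.i$, $min(s)$, $max(s)$, I combine the induction hypothesis with Lemma~\ref{Lem:Arithmetic}. The crucial observation is that every output of the arithmetic layer is either a list of sort $Real$ of size exactly $\overline{prec}$ or the constant $\error$, so composing such operators never causes size blow-up, and the accumulator inside the $\Rec$-implementation of $min$ and $max$ stays of size $O(\overline{prec})$ throughout the single linear pass over the outer list.

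For the document-access constructors $GetLastDocID(s)$, $GetDocByID(s,t)$, and $GetFieldValue(s,t,u)$, I inspect the explicit $\Rec$- and $\bSearch$-bodies used in their definitions. Each of them performs a single linear pass over its input list, and in every iteration the accumulator is either replaced by a sub-component of the current element or extended by a single constant-size element (an identifier, a tuple, or a field). Hence after at most as many iterations as there are elements in the input, the accumulator has size linear in the size of the input, which by induction is polynomial in the size of the already-evaluated arguments, and the surrounding $\Cond$-wrapper adds only a constant-time guard. Composing the polynomial bounds across the finitely many constructors closes the induction.

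The hard part will be confirming that the $\Rec$-terms underlying the access constructors remain polynomial, because a general $\Rec$-construct can yield outputs of exponential size whenever its body $h(\overline{v}, g(\alpha), b)$ duplicates its recursive argument $g(\alpha)$. This must be verified case by case for every constructor in the definition of document term; in each of the listed cases the body is size-monotone with at most a constant increment per step, and the restriction of the grammar precisely to these well-behaved constructors is what rules out the usual size blow-up of unrestricted $\Rec$.
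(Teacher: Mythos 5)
Your proposal is correct and follows essentially the same route as the paper's own proof: induction on the form of the document term, with the base case discharged by the cited result of \cite{OspichevPonomaryov} and the inductive step handled by a case analysis of the explicit definitions of $.i$, $+$, $min$, $max$, $GetLastDocID$, $GetDocByID$, and $GetFieldValue$, followed by a direct check of the definition of $<$ for the comparison claim. Your explicit tracking of output-size invariants to rule out blow-up in the $\Rec$-bodies is a worthwhile elaboration of a point the paper's sketch leaves implicit, but it is not a different method.
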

\Proof
The first point of the lemma is proved by induction on the form of the term $s$. For a standard list term, the claim readily follows from Lemma 2 in \cite{OspichevPonomaryov}. For an arbitrary document term $s$ the claim is shown by analyzing the syntactic form of the terms $.i$, $+$, $min()$, $max()$, $GetLastDocID()$,  $GetDocByID()$, and $GetFieldValue()$. It follows from their definition that each of these terms can be computed in polynomial time in the size of their parameters and $\overline{prec}$. The second point of the lemma is shown by an analysis of the definition for $<$: it gives a polynomial time algorithm to verify whether there is a segment $i\sqsubseteq \overline{prec}$, for which the condition from the definition of $<$ is true. \QED

\subsection{Axioms of Document Theory}
A \textit{document theory} has the form $\bt=\bt_{f}\cup\bt_{s}\cup\bt_{d}$, where the theory $\bt_f$ gives predefined filters, which can be used to select collections of documents, $\bt_s$ gives definitions to document fields and forms (i.e., it describes the data schema, hence, the subscript $s$), and $\bt_d$ describes possible transactions and triggers, their execution rules, and instruction processing rules, which generate documents or update existing ones. Thus, $\bt_d$ describes the dynamics of documents (hence, the subscript $d$). 
\smallskip
First, let us introduce auxiliary terms, which will be used in axioms of $\bt$. The first one gives a form name of a document
{\footnotesize $$\text{Form}(document) \equiv \head(\tail(\tail(document)))$$}
while the second one gives a list, in which the order of elements is reversed:
{\footnotesize $$\rev(list) \equiv \Rec[\nil, \ conc(\lst{b}, g(\alpha)), \ list]$$}
We begin with a definition of theory $\bt_f$. 
For each $name\in FilterNames$, it contains a definition of a filter term of the form below. Every filter gives a list of IDs of (the last version of) those documents from a model, which satisfy conditions specified by the filter:
{\footnotesize 
   $$ GetDocsByFilter_{name}(fName, model, params) = \head(\Rec[\nil, {\tt selection}, \rev(model)])$$
}
where ${\tt selection}$ is a term of the form
{\footnotesize $$\Cond[\head(b)\in g(\alpha), \ g(\alpha)] \ [{\tt filter}(params, b), \  \cons(g(\alpha), \head(b))] \ [g(\alpha)]$$}
${\tt filter}(params, doc)$ is a formula, which represents conditions on the documents to be selected:
{\footnotesize 
$$filter(params, doc) \equiv  \text{Form}(document) = fName \wedge \varphi$$
}
where $\varphi$ is a Boolean combination of formulas of the form $s\propto t$, where $\propto\in\{<,=\}$ and $s,t$ are document terms over variables $params$, $doc$ such that in every term $GetLastDocID(m)$, $GetDocByID(x,m)$, or $GetFieldValue$ $(x,y,m)$ from $s$ or $t$, we have $m=model$.

\medskip

Next, we define the theory $\bt_s$. First of all, it contains axioms that describe fields and cardinalities for their values:
{\footnotesize  $$Field(x)  \equiv \bigvee_{f\in FieldNames} ( \ \head(x) =  f \wwedge \text{Card}(\tail(x)) \ )$$}
where $\text{Card}(y)$ is a cardinality predicate, which restricts the number of elements in a list $y$. 
We consider the following cardinalities: the list is empty; it contains zero or one element (we use notation`` ?''  for this predicate); it contains exactly one element (notation ``!'');
it contains one or more elements.
For example, ``?'' is defined as
{\footnotesize $$?(x) \equiv \forall t\in x \ \cons(\nil, t)=x$$}
The other predicates are defined similarly. 

\smallskip

Further, $\bt_s$ introduces document forms by describing which fields (with their default values) are present in a blank document of a given form:
{\footnotesize 
\begin{multline}\label{Eq:BlankDocDefinition}
    Blank(name) = document \equiv \\
    (\bigwedge_{f\in FormNames} name\neq f \  \wedge \ document = \error) \ \vee 
    \bigvee_{f\in FormNames} (name =f \ \wedge \ \varphi_{f})
\end{multline}
}
\smallskip
where $\varphi_{f}\equiv document=\nil$ or $\varphi_{f}$ has the following form, for a non-empty subset  $N_f\subseteq FieldNames$ (we assume that the elements of $N_f$ are enumerated, $N_f=\{1\lldots n\}$):
{\footnotesize 
\begin{multline*}
\exists x_1\!\in\!document \ldots \exists x_{n}\!\in\!document \\ 
\bigwedge_{i\in N_f}  \  (\head(x_i)=i \wwedge \tail(x_i)=defvalue_i \wwedge Field(x_i))  \ \wedge \forall x\in document \  (\bigvee_{i\in N_f} x=x_i)
\end{multline*}
}
where $defvalue_i$ is a list, which respects the cardinality restriction given in the definition of the $Field(x)$ predicate for $\head(x)=i$. 

The definition of the theory $\bt_s$ is complete. 

\medskip
Now we are ready to define the theory $\bt_d$. It contains definitions of daemons and a definition of a recursive $Update$ function, which given a queue, updates a model to a new state based on the definition of daemons. First, we define the $Update$ function. For the sake of readability, we split its definition into three formulas combined with disjunction and comment on them separately. 

First of all, if the queue is not empty and the first instruction in the queue is not a valid one (i.e., it is neither $CreateDoc$, $SetField$ instruction, nor a transaction name $t\in TransNames$ ) the whole queue is skipped and the model given by the $Update$ function is the initial model. If the queue is empty, then it is assumed that all the instructions in the queue have been processed and thus, $Update$ returns the value of $model$:
{\footnotesize 
\begin{equation}\label{Eq:Update-Init}
\begin{aligned}
Update(initialmodel, model, queue) = model'  \equiv \\
		( \ \head(\head(queue))\not\in\langle CreateDoc, SetField, {\tt tname_1}\lldots {\tt tname_k} \rangle \ &\wedge \\
		 queue\neq\nil \ \wedge model' = initialmodel  \ &)  \\
        \vee \ \ 
        (\ queue=\nil \wedge model' = model \ &) \\
       &\vee
\end{aligned}
\end{equation}
}
where $\{{\tt tname_1}\lldots {\tt tname_k}\}=TransNames$, for $k\geqslant 0$.

Otherwise the queue contains an instruction to create a document of a specific form, change a field value in a document having a certain ID, or launch a specific transaction. In the first case, a blank document of a given form is created (which is implemented by using existential quantification) and added to the model, the instruction is removed from the queue, and the $Update$ function is evaluated recursively on the resulting input. If a blank document of a form  with name $formName$ can not be created (due to $formName\not\in FormNames$) then the queue is skipped and $Update$ returns the initial model:
{\footnotesize 
\begin{multline}\label{Eq:Update-CreateDoc}
    ( \ \head(\head(queue)) = CreateDoc \ \wedge \ 
    \exists document  \ \  document = Blank({\tt formName}) \ \wedge \\
    ( \ (document = \error \wedge model' = initialmodel) \ \vee 
    (document\neq \error \ \wedge \\ 
    model'\!=\!Update(initialmodel, \cons(model, {\tt newdoc}), \tail(queue) ) ) \ ) \      \vee
\end{multline}
}
\noindent
where ${\tt formName}$ stands for $\head(\tail(\head(queue)))$, ${\tt newdoc}$ is a list term of the form {\footnotesize $$\lst{{\tt newsituation}, {\tt formName}, document,  \cons(GetLastDocID(model), \nil) }$$} ${\tt newsituation} = \cons( {\tt Situation}(model), \lst{{\tt formName},CreateDoc} )$, and\newline ${\tt Situation}(model)=\head(\tail(\tail(\tail(\head(model)))))$.

The case of $SetField$ instruction in the queue is formulated similarly, but the formalization is technically more complex, since modifying an already existing document requires more steps than creating a fresh one:
{\footnotesize 
\begin{multline}\label{Eq:Update-SetField}
    ( \ \head(\head(queue)) = SetField \ \wedge
    ( \ ( \ {\tt pdocument} = \error \ \vee \\ 
    \neg Field(\cons({\tt newFldValue}, {\tt fldName})) \ )  \wedge model' = initialmodel )  \ \vee \\ 
    ( {\tt pdocument} \neq \error \wedge  
    model' =  Update( \ initialmodel, \\
    \cons(model, \lst{{\tt newsituation}, {\tt form}, {\tt updateddoc}, docID}), {\tt extendedQueue})) \ ) \\
    \vee
\end{multline}
}
where ${\tt form}=\text{Form}(GetDocByID({\tt docID},model))$, ${\tt pdocument}$ denotes \phantom{aaa} {\footnotesize $FindFieldPosition(\head(\tail(GetDocByID({\tt docID},model)), model), {\tt fldName})$}\newline and ${\tt updatedDoc}$ is a shortcut for 
{\footnotesize 
$$\conc(\tail(\tail({\tt pdocument})), \cons(\head({\tt pdocument}),\cons({\tt newFldValue}, {\tt fldName})))$$
}
in which
{\footnotesize  
\begin{trivlist}
\item ${\tt docID} = \head(\tail(\head(queue)))$
\item ${\tt fldName} = \head(\tail(\tail(\head(queue))))$
\item ${\tt newFldValue} = \head(\tail(\tail(\tail(\head(queue)))))$
\item ${\tt newsituation} = \cons({\tt Situation}(model), \lst{{\tt newFldValue}, {\tt fldName}, {\tt docID}, SetField})$
\item ${\tt Situation}(model)=\head(\tail(\tail(\tail(\head(model)))))$
\end{trivlist}
}
\noindent (recall the instruction modeling conventions). 
\smallskip
Finally, ${\tt extendedQueue}$ is a shortcut for 
{\footnotesize 
\begin{equation*}
\begin{aligned}
SetFieldTrigger( docID, \ fieldName, \ newFieldValue, \ \tail(queue), \ model)
\end{aligned}
\end{equation*}
}
Thus, ${\tt updatedDoc}$ is a document with an updated field value and\newline ${\tt extendedQueue}$  is a sequence of instructions provided by a trigger on a field value change. 
By the definition above, the whole queue is skipped whenever there is no field with the specified name in a given document. Note that in this case $\tail({\tt pdocument}) = \error$ holds by the definition of $FindFieldPosition$ term.

Finally, if $\head(\head(queue))$ is a transaction name, a call to the daemon is made, which defines the corresponding transaction:
{\footnotesize 
\begin{equation}
\begin{aligned}
\bigvee_{tName\in TransNames} ( \  \head(\head(queue)) = tName \wedge model' =  \\
Update(initialmodel, model, \\
 ExecTrans(tName, {\tt docID}, {\tt params}, \tail(queue), model))) \ )
\end{aligned}
\end{equation}
}
where ${\tt docID}= \head(\tail(\head(queue)))$ is a document, for which the transaction is to be executed, and ${\tt params}=\head(\tail(\tail(\head(queue))))$ specifies parameters for the transaction.



\smallskip

Now we are in the position to define functions, which implement daemons. Their purpose is to extend the queue with a sequence of instructions depending on whether a field value in an existing document is changed or a transaction is fired. 
Both functions have similar definitions:

{\footnotesize 
\begin{equation*}
\begin{aligned}
    SetFieldTrigger(docID, fName, fValue, queue, model) & \equiv \Phi \\
    ExecTrans(tName, docID, params, queue, model) & \equiv \Psi
\end{aligned}
\end{equation*}
}
\noindent
where
{\footnotesize $$\Phi = \Cond[\theta_1, q_1]\lldots [\theta_{n}, q_{n}][queue]$$}
and for all $i\in\{1\lldots n\}$, $n\geqslant 0$, $\theta_i$ is a condition of the form
{\footnotesize $$\text{Form}(GetDocByID(docID, model))={\tt formName} \wedge fName={\tt fieldName} \wedge \varphi$$ }
(in this case $\theta_i$ is called \textbf{(${\tt formName}, {\tt fieldName}$)-\textit{condition}}) such that ${\tt formName}$ $\in FormNames$, ${\tt fieldName}\in FieldNames$ and $q_i=\conc(queue,$ ${\tt instr_i})$, where
$\varphi$ is a Boolean combination of formulas of the form ${\tt val_1}\!\propto\!{\tt val_2}$, where $\propto\in\{<, =\}$, and ${\tt val_1}, {\tt val_2}$ are document terms over variables $docID, fValue,$ $model$
and ${\tt instr_i}$ (called \textit{\textbf{queue extension}}) is a list 
{\footnotesize 
\begin{equation*}\label{Eq:QueueExtension}
\conc(s_1, \conc(s_2\ldots \conc(s_{k-1}, s_k)\ldots )
\end{equation*}
}
such that 
\medskip
\begin{itemize}
\item each $s_i$ for $i=1\lldots k$, $k\geqslant 1$ (called \textbf{\textit{instruction term}}) is a list term of the form $\lst{\lst{{\tt val},$  ${\tt fieldName'},$ $docID, SetField}}$ or $\Rec[\nil, h,$  ${\tt DocFilter}]$ with the definition: $g(\nil) = \nil, \ g(\cons(\alpha, id))=h(id)$,\newline where $h(id)=\conc(\lst{\lst{{\tt params}, id, {\tt transName}}}, \ g(\alpha))$,\newline for all $\alpha, id$ such that $cons(\alpha,$ $id)\sqsubseteq {\tt DocFilter}$ \smallskip
\item ${\tt params}$ is a list of the form $\lst{t_1\lldots t_m}$, for $m\geqslant 0$, where every $t_i$ is a document term over variables $docID, fValue, model$ \smallskip
\item ${\tt fieldName'}\in FieldNames$ and ${\tt val}$ is a document term over variables $docID, fValue, model$\newline 
\phantom{aaaaaa} (then $s_i$ is called \textbf{(${\tt formName}, {\tt fieldName'}$)-\textit{instruction}}) \smallskip
\item ${\tt DocFilter}=GetDocsByFilter_{{\tt name}}({\tt frmName}, model, {\tt p})$ \smallskip
\item ${\tt p}$ is a document term over variables $docID, fValue, model$ \smallskip
\item ${\tt name}\in FilterNames$, ${\tt frmName}\in FormNames$, and ${\tt transName}\in$\newline $TransNames$ (then $s_i$ is called \textbf{(${\tt frmName}, {\tt transName}$)-\textit{instruction}}) 
\end{itemize}
\smallskip
Thus, changing a field value in a document may cause addition of instructions to the queue, which change other fields in the same document or execute transactions over sets of documents defined by filters. 
\medskip

The formula $\Psi$ is defined similarly, but with the following minor modification (we use the notations above): 
\begin{itemize}
\item every condition $\theta_i$ has the form {\footnotesize $$\text{Form}(GetDocByID(docID,model))={\tt formName} \wedge tName={\tt transName} \wedge \varphi$$} (in this case $\theta_i$ is called \textbf{(${\tt formName}, {\tt transName}$)-\textit{condition}}), where\newline ${\tt trans\-Name}\in TransNames$ \smallskip
\item every $s_i$ is a list term of the form $\langle \lst{{\tt val}, {\tt fieldName'}, docID, SetField}\rangle$ or $\Rec[\nil, h, {\tt DocFilter}]$, where $h$ is given as $\conc(\lst{\lst{{\tt params'}, id,$ ${\tt transName}}},$ $\ g(\alpha))$ or $\conc(\lst{\lst{{\tt frmName},$ $CreateDoc}}, \ g(\alpha))$, or $s_i$ is of the form $\lst{\lst{{\tt frmName},$ $CreateDoc}}$ (in the latter two cases $s_i$ is called \textbf{(${\tt frmName},$ $CreateDoc$)-\textit{instruction}}) \smallskip
\item ${\tt val}$,  ${\tt val_1},$ ${\tt val_2}$, ${\tt p}$ are document terms over variables $docID, params,$ $model$ and ${\tt params'}$ is a list of the form $\lst{t_1\lldots t_m}$, for $m\geqslant 0$, where every $t_i$ is a document term over variables $params, model$.
\end{itemize}
\smallskip
Thus, executing a transaction over a document may cause addition of instructions to the queue, which change fields in the document, create new documents (of the same or different document form), or execute transactions over sets of documents defined by filters.
\smallskip

The definition of the document theory $\bt$ is complete.

\smallskip

Let the \textit{size} of $\bt$ be the total size of its axioms (given as strings).

\section{Termination of Transactions}

The recursive definition of $Update$ function yields the natural notion of \textit{chase} operator, which for a given document theory $\bt$ and constant lists $model, queue$, where $queue\neq\nil$, outputs lists $model'$ and $queue'$ obtained after processing the first instruction from $queue$ (i.e., $\head(queue))$. In other words, for any list superstructure $HW(\mathcal{M})$, it holds {\footnotesize $$HW(\mathcal{M}) \models Update(model, model, queue)=Update(model, model', queue')$$} where $model'$ is obtained from $model$ by the definition of $Update$ function in $\bt$ without applying recursion and either $queue'$ is obtained in the same way from $queue$, or it holds that $queue=\nil$. We denote this fact as $\lst{model,queue}\mapsto\lst{model',queue'}$. A \textit{chase sequence} wrt $\bt$ for a list $\lst{m_0,q_0}$ of the form above is a sequence of lists $\lst{m_0,q_0}, \lst{m_1,q_1}, \ldots$, where $\lst{m_i,q_i}\mapsto\lst{m_{i+1},q_{i+1}}$, for all $i\geqslant 0$. A chase sequence is \textit{terminating} if it is of the form $s_0\lldots s_k$, for some $k\geqslant 1$, where $s_k=\lst{m_k,\nil}$. 

In the following, we note that there may not exist a terminating chase sequence for a given list $\lst{m,q}$ and a theory $\bt$. Then we formulate a sufficient condition on the form of $\bt$, which guarantees chase termination, and finally we estimate the complexity of computing the chase.

\begin{theorem}[Termination of Transactions is Undecidable]\label{Thm:Termination}
It is undecidable whether there is a terminating chase sequence for a list $s=\lst{model,$ $queue}$ wrt a document theory $\bt$.
\end{theorem}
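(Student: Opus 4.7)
The plan is to reduce from the halting problem for Minsky (two-counter) machines, which is well known to be undecidable. Given such a machine $M$ with finite state set $Q$, initial state $q_0$, halt state $q_h$, and a finite list of transitions of the forms $(q, \text{inc}_i, q')$, $(q, \text{dec}_i, q', q'')$ (decrement if nonzero, jump to $q''$ if zero), and $(q_h, \text{halt})$, I construct a document theory $\bt_M$ and an initial pair $\lst{m_0,\qo}$ such that $M$ halts on empty counters iff a terminating chase sequence for $\lst{m_0,\qo}$ with respect to $\bt_M$ exists.

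The encoding uses a single form name $\text{Config}\in FormNames$ whose fields $S,C_1,C_2$ store, respectively, a distinguished urelement representing the current state of $M$ and the two counter values. Counter values are encoded as natural numbers $\bar{n}$, so increment is $\cons(\nil,\cdot)$, decrement is $\tail(\cdot)$, and zero-test is equality with $\nil$, all expressible as document terms built from $GetFieldValue$ and the standard list operations on its result. The initial model $m_0$ contains one $\text{Config}$ document with $ID=\bar{1}$ and $S=q_0$, $C_1=C_2=\nil$; the initial queue is $\qo=\lst{\lst{\nil,\bar{1},T_{q_0}}}$, invoking a distinguished transaction name $T_{q_0}\in TransNames$.

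For each non-halt state $q$, the daemon $\Psi$ defining $T_q$ contains one $(\text{Config},T_q)$-condition per applicable transition out of $q$, guarded by the state-value match $GetFieldValue(docID,S,model)=q$ and, for decrement transitions, by a zero-test $GetFieldValue(docID,C_i,model)=\nil$ (or its negation). The corresponding queue extension fires the $SetField$ instructions updating the relevant counter and the state field $S$, followed by an $\Rec$-based instruction term iterating over the singleton filter that returns the $ID$ of the unique $\text{Config}$ document, so that the successor transaction $T_{q'}$ is re-enqueued on it. For the halt state $q_h$, the daemon $\Psi$ for $T_{q_h}$ has no matching $(\text{Config},T_{q_h})$-condition, so $\Psi$ evaluates to $\tail(queue)$ and no new instructions are added.

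By induction on the number of simulated steps, each transition of $M$ corresponds to a finite block of chase steps in which the $T_q$-instruction at the head of the queue is consumed and replaced by updates encoding exactly one Minsky transition; the queue stays nonempty precisely while $M$ has not reached $q_h$. When the halt transaction fires it is simply popped, the queue becomes $\nil$, and the chase terminates; conversely, any non-halting run of $M$ yields an infinite chase. Since the encoding is deterministic (the only existential in the Update axioms is the one for $\text{Blank}$, which is never exercised after $m_0$ is set up), the unique chase sequence for $\lst{m_0,\qo}$ terminates iff $M$ halts; any decision procedure for the existence of a terminating chase would therefore decide Minsky halting, contradiction. The main technical obstacle is verifying that the three counter primitives (increment, decrement, zero-test) together with the bookkeeping of the state field can be phrased within the restrictive syntactic shape of $(\text{formName},\text{transName})$-conditions and their queue extensions; once this is done, the simulation is faithful and the reduction routine. \QED
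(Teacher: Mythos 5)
Your overall strategy is sound and genuinely different from the paper's: the paper reduces from Turing-machine halting, representing the unbounded tape as an unbounded collection of one-field documents (one per cell, with field values drawn from the finite set $Q\cup\mathcal{A}$), whereas you reduce from Minsky-machine halting and push the unboundedness into the \emph{values} of two counter fields of a single document. That difference is exactly where your argument has a genuine gap. The terms you need for increment and decrement, namely $\cons(\nil, GetFieldValue(docID, C_i, model))$ and $\tail(GetFieldValue(docID, C_i, model))$, are not document terms under the paper's definition: document terms are generated from \emph{standard} list terms by closing only under $.i$, $+$, $min$, $max$, $GetLastDocID$, $GetDocByID$ and $GetFieldValue$; the grammar does not permit applying $\cons$, $\tail$, $\head$ or $\conc$ on top of a $GetFieldValue$-term. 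The only arithmetic closed over document terms is $+$ on the sort $Real$, which has fixed precision $prec$ and so cannot carry unbounded counters. Hence the ${\tt val}$ slot of your $SetField$ instruction terms cannot hold ``the old counter value plus one'', and the claim that the three counter primitives ``can be phrased within the restrictive syntactic shape'' --- which you yourself flag as the main obstacle and then defer --- is precisely the step that fails. This is not incidental: keeping all field values in a finite set is what lets the paper's own encoding stay inside the fragment.

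The gap is repairable without abandoning your reduction. One option is to thread the counter values through transaction parameters instead of fields: in a queue extension the successor transaction is enqueued via $\Rec[\nil,h,{\tt DocFilter}]$ with $h=\conc(\lst{\lst{{\tt params'}, id, {\tt transName}}}, g(\alpha))$, and the components of ${\tt params'}$ are document terms over the variables $params, model$; since $\cons(\nil,\head(params))$ and $\tail(\head(params))$ are \emph{standard} list terms over the variable $params$, they are legitimate document terms, and the zero test $\head(params)=\nil$ is a legitimate conjunct of a $({\tt Config}, T_q)$-condition. A singleton filter selecting the unique ${\tt Config}$ document then re-enqueues $T_{q'}$ with the updated parameter list. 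The other option is to encode each counter in unary as a chain of documents, which essentially converges to the paper's tape-of-cells construction. Either way the faithfulness induction must be redone for the corrected encoding; as written, the reduction does not type-check against the definition of a document theory.
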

\begin{proof}
The theorem is proved by a reduction of the halting problem for Turing machines. We define a Turing machine (TM) as a tuple $M=(Q,\mathcal{A},\delta)$, where $Q$ is a set of states, $\mathcal{A}$ an alphabet containing a distinguished blank symbol $\b$, and $\delta: Q\times\mathcal{A} \mapsto  Q\times\mathcal{A}\times\{-1,1\}$ a (partial) transition function. We assume w.l.o.g. that the tape of $M$ is right-infinite and a configuration of $M$ is a word over $Q\cup\mathcal{A}$, which contains exactly one state symbol $q\in Q$. An initial configuration is a word $\c_0$ of the form $\b\qo\b\ldots \b$, where $\qo\in Q$. For a configuration $\c$, the successor configuration is defined by $\delta$ in a standard way and is denoted as $S_\delta(\c)$. We define the halting problem as the set of TMs, for which there is a finite sequence of configurations $\c_0,\ldots , \c_k$, $k\geqslant 0$ such that $\c_{i+1}=S_\delta(\c_i)$, for all $0 \leqslant i < k$, and $S_\delta(\c_{k})$ is undefined.
Given a TM $M$, we define a document model theory $\bt$, which encodes $M$. 

Let $FormNames=\{TMcell\}$, $FieldNames=\{TMsymbol\}$, $TransNames$ $=\{MakeTMStep\}$, and $FilterNames=\varnothing$. 

The theory $\bt$ contains the following axioms, which specify the single document form and field used for representing the content of the tape of $M$:
{\footnotesize 
\begin{equation}
Field(x) \ \equiv \ \head(x) = TMsymbol \ \wedge \ !(tail(x))
\end{equation}
\begin{multline}
    Blank(name) = document \ \\ \equiv \ 
    ( name\neq TMcell \wwedge document = \error) \ \vee \ 
    \varphi_{name}
\end{multline}
}
\smallskip
where $\varphi_{name}$ has the form
{\footnotesize 
\begin{multline*}
name = TMcell \wwedge  \exists x \in document  \\ 
(\ \head(x)=TMsymbol \wwedge \tail(x) = \lst{\b} \wwedge Field(x) \wwedge \forall y\in document \  (x=y) \ ) 
\end{multline*}
}
Since $FilterNames=\varnothing$, the theory $\bt$ contains no axioms for filter functions, so  we formulate next the definition of the recursive $Update$ operator. It is given by the disjunction of formulas (\ref{Eq:Update-Init})--(\ref{Eq:Update-SetField}) with the following formula:
{\footnotesize 
\begin{multline}
\begin{aligned}
\head(\head(queue) = MakeTMStep  \ \wedge \ 
model' =  Update(\ initialmodel, model, \\
 ExecTrans(MakeTMStep, {\tt docID}, \nil, \tail(queue), model) \ )
\end{aligned}
\end{multline}
}
where ${\tt docID}= \head(\tail(\head(queue)))$. 

\smallskip

Finally, the daemons are defined as follows. The first definition is trivial, it says that changing a field value does not yield any extension of the queue:
{\footnotesize 
$$SetFieldTrigger(docID, fieldName, fieldValue, queue, model) \equiv queue$$
}
The second daemon encodes transitions of $M$:
{\footnotesize 
\begin{equation}
\begin{aligned}
ExecTrans(tName, docID, params, queue, model) \equiv \\
 \Cond[docID=1, \nil][\theta_1, q_1]\lldots [\theta_{n}, q_{n}][\nil]
\end{aligned}
\end{equation}
}
such that there is one-to-one correspondence between the set of pairs $[\theta_i, q_i]$, for $i=1\lldots n$, $n\geqslant 0$, and the graph of the transition function $\delta$ given as follows (we assume below that $a, a'\in\mathcal{A}$ and $q, q'\in Q$).

If $\delta((a,q))=(a',q',-1)$, then there is $i\in\{1\lldots n\}$ such that
{\footnotesize
\begin{multline*}
\theta_i = \text{Form}(GetDocByID(docID))=TMCell \ \wedge \ tName = MakeTMStep \ \wedge \ \\ GetFieldValue(\tail(docID),\ TMSymbol)=\lst{a} \ \wedge \ \\
GetFieldValue(docID, \ TMSymbol)=\lst{q}
\end{multline*}
}
and $q_i=\conc(queue, \lst{s_3, s_2, s_1})$, where 
\begin{itemize}
\item $s_1=\lst{\lst{q'}, TMSymbol, \ \tail(docID),  \ SetFieldValue}$
\item $s_2=\lst{\lst{a'}, TMSymbol, \ docID, \ SetFieldValue}$
\item $s_3=\lst{\nil, \ \tail(docID), \ MakeTMStep}$
\end{itemize}

\medskip

If $\delta((a,q))=(a',q',1)$, then there is $i\in\{1\lldots n\}$ such that $\theta_i$ is of the form above 
and
{\footnotesize $$q_i=\conc(queue, \ \Cond[docID=GetLastDocID(model), \ {\tt add\& updtape}][{\tt updtape}]),$$} 
where 
\begin{itemize}
\item ${\tt add\& updtape}=\cons({\tt updatetape'}, \ \lst{TMCell, CreateDoc})$
\item ${\tt updatetape'} = \lst{s_4, s_3, s_2, s_1}$
\item $s_1=\lst{\lst{a'}, \ TMSymbol, \ \tail(docID), \ SetFieldValue}$
\item $s_2=\lst{\lst{\b}, \ TMSymbol, \ docID, \ SetFieldValue}$
\item $s_3=\lst{\lst{q'}, \ TMSymbol, \ \cons(docID,\nil), \ SetFieldValue}$
\item $s_4=\lst{\nil, \ \cons(docID,\nil), \ MakeTMStep}$
\item ${\tt updatetape} = \lst{s_4, s_3, s_2', s_1}$
\end{itemize}
and
{\footnotesize 
\begin{multline*}
s_2'=\lst{GetFieldValue(\cons(docID,\nil), TMSymbol, model), \\
TMSymbol, \ docID, \ SetFieldValue}
\end{multline*}
}

The definition of document theory $\bt$ is complete. 

\medskip

Now we define a list of instructions $initqueue$, which encodes the first two symbols of the initial configuration of $M$ and enforces execution of $MakeTMStep$ transaction over a document, in which the value of the field $TMSymbol$ is $\qo$. We set {\footnotesize $$initqueue=\conc(\lst{{\tt runTM}}, {\tt filltape})$$} where ${\tt filltape}=\lst{ \ \lst{\lst{\qo}, TMSymbol, 2, SetFieldValue},  \ \lst{TMcell,$ $CreateDoc},$ $\lst{TMcell, CreateDoc} \ }$ and ${\tt runTM}=\lst{\nil, 2, MakeTMStep}$. 

It can be shown that there is a terminating chase sequence for $\lst{\nil, initqueue}$ iff $M$ halts. \QED
\end{proof}

A close inspection of the theory $\bt$ from Theorem \ref{Thm:Termination} shows that non-termination may be caused by the ability to change a field value of the same document or execute the same transaction infinitely many times. Thus, in general the definition of $SetFieldTrigger$ and $ExecTrans$ functions of $\bt$ allows for cyclic references between instructions and transactions. In the following, we demonstrate that if one forbids cycles then chase termination is guaranteed.

\begin{definition}[Dependency Graph]
A \textit{dependency graph} over a document theory $\bt$ is a directed graph with the set of vertices $V$ equal to $FormNames\times (FieldNames\cup TransNames\cup\{CreateDoc\})$ and the set of edges $E$ defined as follows. 

\noindent For any $(\!form,\!name), (\!form',\! name'\!)\!\in\!V\!$, there is an edge from $(\!form,\! name\!)$ to $(\!form',\!name'\!)$ if there is $[\theta,q]$ in the definition of $SetFieldTrigger$ or $ExecTrans$ functions in $\bt$, in which $\theta$ is a $(form, name)$-condition and $q=\conc({\tt queue},$ ${\tt instr})$, for a list ${\tt queue}$ and queue extension ${\tt instr}$, such that there is a $(form',$ $name')$-instruction in the definition of ${\tt instr}$.
\end{definition}

\begin{definition}[Locally Simple Document Theory]
A document theory $\bt$ is called \textit{locally simple} if the dependency graph over $\bt$ is acyclic.
\end{definition}

We now introduce several auxiliary notions, which will be used in two theorems below. 
We slightly abuse our terminology and for a document theory $\bt$ we call a constant list $t$ \textit{instruction} if it is of the form $\lst{formName,$ $CreateDoc}$ (a \textit{CreateDoc-instruction}), or $\lst{val, fieldName, docID,$ $SetField}$ (\textit{SetField-instruction}), or $\lst{params, docID, transName}$ (\textit{transaction}), where $formName$ $\in FormNames$, $fieldName\in FieldNames$, $transName\in TransNames$, and $val, params, docID$ are lists such that $docID=\bar{n}$, for some $n\in \omega$. 

\smallskip

Let $G$ be a dependency graph over $\bt$, and $model$ a list. An instruction $t$ is said to have \textit{rank} $k$ wrt $\bt, model$, for $k\geqslant 0$, if
\begin{itemize}
\item $t$ is a $SetField$-instruction or a transaction as above, respectively, and there is an element of the form $\lst{sit, formName, doc, docID}\in model$, for lists $sit, doc$ and $formName\in FormNames$, such that the longest path outgoing from $(formName, fieldName)$ or $(formName, transName)$ in $G$, respectively, has $k+1$ vertices; \smallskip
\item the above conditions do not apply and $k=0$.
\end{itemize} 

\begin{theorem}[Local Simplicity Implies Termination of Transactions]\label{Thm:LocalSimplicityGivesTermination}
For any locally simple document theory $\bt$ and constant lists ${\tt model}$, ${\tt queue}$, there is a terminating chase sequence for $\lst{{\tt model}, {\tt queue}}$ wrt $\bt$. 
\end{theorem}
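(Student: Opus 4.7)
The plan is to exhibit a well-founded measure on configurations $\lst{m,q}$ that strictly decreases under every chase step. Since $\bt$ is locally simple, the dependency graph $G$ is a finite DAG; let $N=|V(G)|$, so the ranks of instructions wrt any model lie in $\{0,\ldots,N-1\}$. Call an instruction $t\in q$ \textit{dormant} wrt $m$ if $t$ is a $SetField$-instruction or a transaction whose $docID$ does not occur as the identifier of any tuple in $m$, and let $n_d(q,m)$ be the number of dormant instructions in $q$. For $i\in\{0,\ldots,N-1\}$, let $c_i(q,m)$ be the number of non-dormant instructions in $q$ of rank $i$ wrt $\bt$ and $m$ (so every $CreateDoc$-instruction falls into $c_0$). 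Define
\[
\mu(q,m) \;=\; \bigl(n_d(q,m),\; c_{N-1}(q,m),\; c_{N-2}(q,m),\;\ldots,\; c_0(q,m)\bigr)
\]
ordered lexicographically; this gives a well-founded order on $\mathbb{N}^{N+1}$, and it suffices to check $\mu(q',m')\prec\mu(q,m)$ for each chase step $\lst{m,q}\mapsto\lst{m',q'}$.

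I would verify this by a case split on $t=\head(q)$. If $t$ is invalid ($CreateDoc$ on an unknown form, $SetField$ on a missing field, etc.), then $q'=\nil$ and $\mu$ collapses to the zero tuple. If $t$ is dormant, the literal $\text{Form}(\text{GetDocByID}(docID,model))=formName$ in every $\theta_i$ of the relevant daemon fails (no document of $docID$ lies in $m$), no queue extension is generated, and $n_d$ drops by one while all $c_i$ remain unchanged. If $t$ is a valid $CreateDoc$, a new document with fresh identifier $\cons(\text{GetLastDocID}(m),\nil)$ is appended to $m$; this identifier differs from every $docID$ referenced by a non-dormant instruction (see the next paragraph), so no non-dormant rank changes, $c_0$ loses one, and $n_d$ may only shrink further if the fresh ID happens to match some initial dormant instruction. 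Finally, if $t$ is a non-dormant $SetField$-instruction or transaction of rank $k$ at graph vertex $v=(F,name_t)$, any queue extension produced by a firing $\theta_i$ consists, by the definition of $G$, of instructions at successor vertices of $v$, each of rank strictly less than $k$; hence $c_k$ drops by one while only $c_j$ with $j<k$ may increase, giving a lex-decrease at coordinate $c_k$.

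The main obstacle is the stability claim tacitly used in the last two cases: the rank of any non-dormant instruction is invariant under subsequent model growth. Establishing this requires a careful inspection of the permitted shapes of instruction terms in queue extensions of $SetFieldTrigger$ and $ExecTrans$: every non-$CreateDoc$ child either refers to the same $docID$ as its parent or arises from a $\Rec[\nil,h,\text{DocFilter}]$ term whose inner filter returns $docID$'s of documents currently in $m$. Consequently every non-dormant instruction ever enqueued during the chase refers to a document already in the model with a fixed form, and since $CreateDoc$ always assigns a fresh identifier distinct from all previously referenced ones, the rank of no such instruction can be altered by future $CreateDoc$'s. Together with the well-foundedness of the lexicographic order on $\mathbb{N}^{N+1}$, this yields the desired terminating chase sequence.
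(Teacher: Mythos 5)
Your proof is correct, and it reaches the conclusion by a genuinely different route than the paper. The paper argues by induction on the rank of the head instruction that processing a single instruction, together with everything it spawns, reaches $\lst{model',\tail({\tt queue})}$ in finitely many steps, and then iterates this over the queue; you instead exhibit one global well-founded measure --- a lexicographic tuple of instruction counts stratified by rank, with the count of ``dormant'' instructions placed in the most significant position --- and show it decreases at every single chase step. Both arguments hinge on the same combinatorial fact, namely that every instruction term occurring in a fired queue extension corresponds to a successor vertex in the acyclic dependency graph and therefore has strictly smaller rank than its parent. What your formulation adds is an explicit treatment of a point the paper's induction passes over: rank is computed relative to the \emph{current} model, so an instruction in the initial queue referencing a not-yet-existing $docID$ has rank $0$ but can acquire a positive rank once a later $CreateDoc$ produces that identifier. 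Putting $n_d$ first in the lexicographic order absorbs exactly this event, and your observation that all instructions generated during the chase reference documents already present in the model (whose forms never change, and whose identifiers are never reused by the sequentially assigned fresh IDs) supplies the rank-stability needed for the remaining coordinates. The trade-off is that the paper's per-instruction induction directly yields the quantitative per-instruction step bounds that are reused in the proof of Theorem \ref{Thm:Polytime}, whereas your measure gives termination without immediately producing such bounds. One small imprecision: a dormant $SetField$-instruction at the head of the queue does not merely decrement $n_d$ --- by the definition of the $Update$ function it yields ${\tt pdocument}=\error$ and the whole queue is skipped --- but this only makes the measure drop further, so the argument is unaffected.
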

\begin{proof}
We show that for any such ${\tt model}$ and ${\tt queue}$, there is a finite chase sequence $s_0,\ldots , s_n$, where $s_0=\lst{{\tt model}, {\tt queue}}$, $n\geqslant 1$, such that $s_n=\lst{model', \tail({\tt queue})}$, where $|model'|=|{\tt model}|+p$, for some $p\geqslant 0$. This yields that any instruction from ${\tt queue}$ can be processed in a finite number of steps, from which the claim follows. 

Let $s_0,s_1\ldots $ be a chase sequence for $s_0$ and let $s_1=\lst{m, q}$. We use induction on the rank $k$ of instruction $t = \head({\tt queue})$ wrt $\bt, {\tt model}$.

If $k=0$ then there are two possible cases:
\begin{itemize}
\item $t$ is a SetField-instruction, $|m|=|model|$, and $q=\nil$ or $q=\tail({\tt queue})$ (since $\bt$ is locally simple, no new instructions can appear in $q$ after processing a SetField-instruction of rank $0$) \smallskip
\item $t$ is a CreateDoc-instruction, $|m|\leqslant |model|+1$, and $q=\nil$ or $q=\tail({\tt queue})$ (since by the definition of the $Update$ function from a document theory, no new instructions can appear in $q$ after processing a CreateDoc-instruction)
\end{itemize}
If $k\geqslant 1$ then $t$ is not a CreateDoc-instruction, hence, $|m|=|{\tt model}|$ and either $q=\nil$ or $q=\tail({\tt queue})$, or $q=\conc(\tail({\tt queue}), c)$, where $c=\lst{t_1\lldots t_j}$, $j\geqslant 1$, is a list of instructions of rank smaller than $k$. Then by applying the induction assumption we obtain the required statement. \QED
\end{proof}

Although the theorem states that local simplicity guarantees termination, it does not provide any insight on how difficult it is to compute the effects of transactions. The next result indicates that the complexity is high, which is due to the possibility to create exponentially many documents by using recursive instruction terms. 

\medskip

For $n\geqslant 0$, let $\EXP{1}{n}$ be the notation for $2^n$ and for $k\geqslant 1$, let $\EXP{(k+1)}{n}=2^\EXP{k}{n}$.

\begin{theorem}[Computing Effects of Transactions is Hard]\label{Thm:Hardness}
For any $k\geqslant 1$, $n\geqslant 0$, there exists a locally simple document theory $\bt$ and a constant list ${\tt queue}$, both of sizes linear in $k,n$, such that the terminating chase sequence for $s_0=\lst{\nil, {\tt queue}}$ wrt $\bt$ has the form $s_0\lldots s_m$, where $m\geqslant\EXP{k}{n}$ and $s_m=\lst{model,\nil}$, for a list $model$ such that $|model|\geqslant\EXP{k}{n}$.
\end{theorem}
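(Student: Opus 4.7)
The plan is to construct, by induction on $k$, a locally simple document theory $\bt_k$ and an initial queue ${\tt queue}_k$ of size linear in $k+n$ whose chase exhibits a $k$-fold exponential blow-up.

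For the base case $k = 1$, use a single form $F$ and transactions $T_0, T_1, \ldots, T_n$ on $F$. For $i \geqslant 1$, the branch of $ExecTrans$ associated with $T_i$ has a queue extension consisting of two $(F, CreateDoc)$-instructions followed by two transaction-instructions firing $T_{i-1}$ on the newly created documents (referenced via the predictable IDs $GetLastDocID(model) + \bar 1$ and $GetLastDocID(model) + \bar 2$); the leaf $T_0$ has an empty queue extension. Starting from an initial queue that creates a single $F$-document and schedules $T_n$ on it, each $T_i$ is fired $2^{n-i}$ times, producing at least $2^{n+1} - 1 \geqslant \EXP{1}{n}$ CreateDoc-instructions and a chase of matching length. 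The dependency graph is the chain $(F, T_n) \to \cdots \to (F, T_0)$ together with forward edges to $(F, CreateDoc)$, hence acyclic.

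For the inductive step, assume $\bt_{k-1}$ and ${\tt queue}_{k-1}$ are already constructed and that the corresponding chase produces at least $\EXP{k-1}{n}$ documents of a distinguished form $G_{k-1}$. Extend with a fresh form $G_k$ and fresh transactions $U_0, \ldots, U_n$, and route the terminal leaf of $\bt_{k-1}$ so that it creates one $G_k$-document and fires $U_n$ on it. The $U$-chain implements doubling on $G_k$ exactly as in the base case, except that the leaf $U_0$'s queue extension is a $\Rec$-term of the form $\Rec[\nil, h, {\tt DocFilter}]$ with $h$ realising the iteration $\conc(\lst{\lst{G_k, CreateDoc}}, g(\alpha))$ and ${\tt DocFilter}$ matching all existing $G_k$-documents. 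Each $U_0$-firing thus appends a bulk-doubling of $G_k$ to the queue. With at least $\EXP{k-1}{n}$ sequentially processed $U_0$-firings, the $G_k$-population undergoes $\EXP{k-1}{n}$ doublings, reaching at least $2^{\EXP{k-1}{n}} = \EXP{k}{n}$ documents. Local simplicity is preserved because all new vertices of the dependency graph emit edges only to other new vertices (downward along the $U$-chain and to $(G_k, CreateDoc)$), together with a single forward edge from the terminal leaf of $\bt_{k-1}$ into $(G_k, U_n)$; no edges are added into $\bt_{k-1}$-vertices, so the extended graph remains a DAG.

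The main obstacle is arranging that the bulk-doublings performed by the $U_0$-firings are applied sequentially to a cumulative population rather than each acting on the same initial state. Since $\Rec$-terms in $ExecTrans$ are evaluated at transaction-firing time but the generated CreateDoc-instructions are processed in FIFO order, the $U_i$'s queue extensions must be designed so that every $U_0$-firing is reached only after all CreateDoc-instructions produced by its predecessors have been processed. This is achieved by placing the bulk-CreateDoc $\Rec$-term at the end of $U_0$'s queue extension and by carefully interleaving $U_{i-1}$-firings with their corresponding CreateDocs along the chain, exploiting the predictable-ID encoding of the base case. The resulting verification reduces to checking the recurrence $F_k(n) \geqslant 2^{F_{k-1}(n)}$, which gives the $k$-fold exponential lower bound by induction on $k$.
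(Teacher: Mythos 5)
There is a genuine quantitative gap in your inductive step. To reach $\EXP{k}{n}=2^{\EXP{k-1}{n}}$ documents of form $G_k$ by repeated doubling, the doubling transaction must \emph{fire} at least $\EXP{k-1}{n}$ times. Your construction schedules $U_n$ only from firings of the terminal leaf of $\bt_{k-1}$, and then multiplies the firing count by $2^n$ through the $U$-chain. But the number of leaf \emph{firings} at level $k-1$ does not keep pace with the number of \emph{documents} at level $k-1$: at levels $\geqslant 2$ the documents are created in bulk by $\Rec$-terms (one $U_0$-firing creates an entire population's worth of CreateDoc-instructions), so if $L_j$ denotes the number of leaf firings at level $j$, your recursion gives $L_k = L_{k-1}\cdot 2^n \approx 2^{kn}$, while the document count is $\EXP{k-1}{n}$. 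Hence for $k\geqslant 3$ you obtain roughly $2^{2^{kn}}$ documents of $G_k$, far below $\EXP{k}{n}=2^{2^{2^{n}}}\!\cdots$, and the claimed recurrence $F_k(n)\geqslant 2^{F_{k-1}(n)}$ is not realized by the described mechanism. The missing idea is a device that converts \emph{document counts} back into \emph{transaction firings}: the paper does this with a transaction ($MakeExp_i$) whose queue extension is a $\Rec$-term over the filter selecting all documents of form $Form_{i-1}$, with $h(id)=\conc(\lst{\lst{\nil, id, Duplicate}}, g(\alpha))$, i.e., it emits one $Duplicate$ transaction \emph{per existing document} of the previous level; each $Duplicate$ then doubles the $Form_i$ population via a second filter-driven $\Rec$-term. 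With that, $\EXP{i-1}{n}$ documents yield $\EXP{i-1}{n}$ doublings and hence $\EXP{i}{n}$ documents, and no transaction chain of length $n$ per level is needed at all (the paper needs only two transaction names per level plus a shared filter, keeping sizes linear).

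A secondary point: your sequencing discussion assumes FIFO processing, but in the formalization $\head$ returns the \emph{last} element of a list and daemons append extensions via $\conc(queue,{\tt instr})$, so freshly added instructions are processed \emph{first} (depth-first/LIFO). This is precisely what makes the successive doublings act on the cumulative population in the paper's proof, so the ``main obstacle'' you describe is resolved by the semantics rather than by interleaving tricks; however, fixing this observation alone does not repair the firing-count deficiency above. Your base case ($k=1$) is correct, though even there the paper's filter-based doubling is what generalizes, whereas the binary-tree chain $T_n,\ldots,T_0$ does not.
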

\begin{proof}
Given numbers $k,n$, let us define a locally simple document theory $\bt$ as follows. 
Let $FormNames=\{Form_0\lldots Form_k\}$, $FieldNames=\varnothing$,\newline $FilterNames\!\!=\!\!\{SelectAllbyForm\}$, and $TransName\!\!=\!\!\{Duplicate,$  $Duplicate'\}$ $\cup\{MakeExp_1,$ $\ldots , MakeExp_k\}$. There are no definitions of document fields in the theory $\bt$, since $FieldNames=\varnothing$, and hence, the definition of document forms given by equation (\ref{Eq:BlankDocDefinition}) is trivial.


The important part is the definition of document filters and daemons. For every $i=1\lldots k$, there is a definition of a filter function, which gives id's of all documents of the form $fName$: 
{\footnotesize 
\begin{equation*}
\begin{aligned}
    GetDocsByFilter_{SelectAllbyForm}(fName, model, params) = \\
    \head(\Rec[\nil, \ {\tt selection}, \ \text{rev}(model)])
\end{aligned}    
\end{equation*}
}
where ${\tt selection}$ is a term of the form
{\footnotesize 
\begin{equation*}
\Cond[\head(b)\in g(\alpha), \ g(\alpha)] \ [Form(b)=fName, \  \cons(g(\alpha), head(b))]\ [g(\alpha)]   
\end{equation*}
}

The functions, which implement daemons, are defined as follows. The first one is trivial:
{\footnotesize $$SetFieldTrigger(docID, fName, fValue, queue, model) \equiv queue$$
}
The second one defines transactions, which duplicate the number of documents in a model:
{\footnotesize
\begin{multline*}
ExecTrans(tName, docID, params, queue, model) \equiv \\
 \Cond [\phi_1, p_1]\ldots [\phi_{k}, p_{k}] [\psi_1, q_1]\ldots [\psi_{k}, q_{k}][\nil]
\end{multline*}
}
such that for $i=1\lldots k$: 

$\phi_i \ = \ \text{Form}(GetDocByID(docID))=Form_i \ \wedge \ tName = MakeExp_i,$

$p_i=\conc(queue, \ \Rec[\nil, h, {\tt DocFilter}])$, where
\begin{center}
{\footnotesize $h \ = \ \conc( \ \lst{\lst{\nil, \ id, \ Duplicate}}, \ g(\alpha) \ )$} 
{\footnotesize ${\tt DocFilter}\ =\ GetDocsByFilter_{SelectAllbyForm}(Form_{i-1}, \ model, \ \nil)$}
\end{center}

\medskip

$\psi_i = Form(GetDocByID(docID))=Form_{i-1} \ \wedge \ tName = Duplicate$, 

$q_i=\conc(queue, \Rec[\nil, h, {\tt DocFilter}])$, where
\begin{center}
{\footnotesize $h \ = \ \conc( \ \lst{\lst{Form_{i}, \ CreateDoc}}, \ g(\alpha) \ )$} 
{\footnotesize ${\tt DocFilter}\ =\ GetDocsByFilter_{SelectAllbyForm}(Form_{i}, \ model, \ \nil)$}
\end{center}




The definition of the theory $\bt$ is complete. It can be readily verified that $\bt$ is locally simple: the non-singleton connected components of the dependency graph over $\bt$ are given  by pairs {\footnotesize $$(\lst{Form_i, MakeExp_i}, \ \lst{Form_{i-1}, Duplicate})$$} for all $i=1\lldots k$.

\medskip

Now let us define a list ${\tt queue} = \conc({\tt run}, \ {\tt init})$, where 
{\footnotesize 
\begin{multline*}
{\tt run}=\lst{\lst{\nil, \ GetLastDocID(model), \ MakeExp_k}\lldots \\
 \lst{\nil, \ GetLastDocID(model), \ MakeExp_1}}
\end{multline*}
}
and ${\tt init}$ is a list of the form
{\footnotesize
\begin{multline*}
\lst{\lst{Form_k,CreateDoc}\lldots\lst{Form_1,CreateDoc}, \\ \underbrace{\lst{Form_0,CreateDoc}\lldots\lst{Form_0,CreateDoc}}_{n \ \text{times}}}
\end{multline*}
}

\smallskip

It can be shown by induction on the number $k$ that there is a terminating chase sequence $s_0\lldots s_m$ for $s_0=\lst{\nil, {\tt queue}}$ such that $s_m=\lst{model, \nil}$,  {\footnotesize $$|model|=n+\sum_{i=1\lldots k}\EXP{i}{n}$$} (there are $n$ documents of the form $Form_0$ and $\EXP{i}{n}$-many documents of the form $Form_i$ in $model$, for all $i=1\lldots k$), and {\footnotesize $$m=(n+2k)+n+\sum_{i=1\lldots k}(\EXP{i}{n}-1) + \sum_{i=1\lldots k-1}\EXP{i}{n}$$} 

where $n+2k$ is the number of instructions in ${\tt queue}$, the additional $n$ is the number of $\lst{\!\!Form_0,  Duplicate\!\!}$ transactions (generated by a $\lst{\!\!Form_1, MakeExp_1\!\!}$ transaction), and for each $i=1\lldots k$, there are $(\EXP{i}{n}-1)$-many $\lst{Form_{i},$ $CreateDoc}$ instructions (generated by $\lst{Form_{i-1}, Duplicate}$ transactions) and for each $i=1\lldots k-1$, there are $\EXP{i}{n}$-many $\lst{Form_{i},$ $Duplicate}$ transactions (generated by a $\lst{Form_{i+1},$ $MakeExp_{i+1}}$ transaction). \QED 
\end{proof}

Finally, let us formulate a sufficient condition, which guarantees polynomial boundedness of effects of transactions. 
Let $G$ be a dependency graph over a document theory $\bt$ and for $form\in FormNames$, $name\in FieldNames\cup TransNames\cup\{CreateDoc\}$, let $s$ be a $(form,name)$-instruction in a queue extension from the definition of $SetFieldTrigger$ or $ExecTrans$ functions in $\bt$. We call the term $s$ \textit{document generating} if either $name=CreateDoc$ or  $s=\Rec[\nil, h, {\tt DocFilter}]$, where $h=\conc(\lst{\lst{form, CreateDoc}}, \ g(\alpha))$, or $(form,name)$ has a successor vertex $(form',name')$ in $G$, which is given by a document generating term. 

\begin{theorem}[Polynomially Bounded Effects of Transactions]\label{Thm:Polytime}
Let $\bt$ be a locally simple document theory such that in any queue extension from the definition of $SetFieldTrigger$ or $ExecTrans$ functions in $\bt$, there are no document generating \Rec-terms.

Then for any constant list ${\tt model}$ and a list of instructions ${\tt queue}$, the terminating chase sequence for $s_0=\lst{{\tt model}, {\tt queue}}$ has the form $s_0\lldots s_n$, where $n$ is exponentially bounded by the size of $\bt$ and $s_0$, and $s_n=\lst{model', \nil}$, for a list $model'$ of size polynomially bounded by the size of $\bt$ and $s_0$.
\end{theorem}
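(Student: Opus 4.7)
The plan is to combine the termination guarantee of Theorem~\ref{Thm:LocalSimplicityGivesTermination} with a rank-based induction on instructions with respect to the dependency graph $G$ of $\bt$, where the key structural input is the hypothesis that no Rec-term in $\bt$ is document-generating. By local simplicity, $G$ is acyclic; let $d$ denote its depth and $c$ the maximum number of instruction terms in any queue extension in the definitions of $SetFieldTrigger$ or $ExecTrans$. Both $c$ and $d$ are bounded by the size of $\bt$.

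Unfolding the definition of ``document-generating'' shows that the hypothesis is equivalent to the statement that every Rec-edge of $G$ leads to a vertex from which no CreateDoc vertex is reachable in $G$. Hence whenever a Rec-term is expanded during the processing of some instruction, it adds up to $|\textrm{current model}|$ many transaction instructions to the queue, but none of these, nor any descendant in their chase subtrees, can introduce a CreateDoc instruction. Every CreateDoc processed during the chase must therefore arise either from the initial queue or as a direct (non-Rec) instruction descendant, via a chain of direct edges in $G$, of some instruction already in the initial queue. This cleanly decouples the analysis of $|model'|$ (driven by direct edges only) from that of $n$ (which additionally accumulates Rec-expansion factors).

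To bound $|model'|$, I induct on the rank of an instruction in $G$ and argue that processing it generates at most a polynomial-in-$|\bt|$ number of CreateDoc descendants via the direct-edge subtree; summing over the $O(|s_0|)$ initial instructions yields the polynomial bound on $|model'|$ in $|\bt|$ and $|s_0|$. For $n$, a second induction on rank gives the recurrence $N(k) \leq 1 + c\,N(k-1) + c\,M\,N(k-1)$, where $M \leq |model'|$ accounts for Rec-expansion at each step, and solves to $N(d) = (cM)^{O(d)}$, which is exponential in $|\bt|$ and polynomial in $|s_0|$; summed over the initial queue this gives the stated exponential bound on $n$.

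The main obstacle is obtaining a genuinely polynomial bound on the number of direct CreateDoc descendants of each initial instruction. The naive tree-unfolding estimate only yields $c^d$, which is exponential in $|\bt|$. A polynomial bound appears to require a more careful argument that exploits the acyclicity of $G$ — for instance, amortizing the direct-edge branchings over the polynomially many $(form',name')$-successor vertices that can appear along any particular path in $G$ rather than multiplying branching times depth naively, and using that CreateDoc vertices are sinks to truncate the subtree at every leaf.
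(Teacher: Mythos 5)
Your skeleton---rank induction over the acyclic dependency graph, a per-instruction recurrence of the shape $N(k)\le c\,M\,N(k-1)$ for the number of chase steps, and the observation that the subtree below a non-document-generating $\Rec$-term never creates documents---is exactly the paper's argument, and your exponential bound on $n$ goes through. The genuine gap is the point you yourself flag as ``the main obstacle'': you leave the polynomial bound on $|model'|$ unproved, and the amortization idea sketched in your last paragraph is not an argument (it is not clear what quantity is being amortized, and CreateDoc vertices being sinks does not by itself tame a $c^d$ tree unfolding).

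The obstacle is, however, illusory, and the fix is a one-line syntactic observation rather than a clever counting scheme. Inspect the grammar of queue extensions: a CreateDoc instruction term occurs only in $ExecTrans$ (never in $SetFieldTrigger$), and outside of a $\Rec$-term it is always a singleton $\lst{\lst{{\tt frmName}, CreateDoc}}$; dually, a transaction can only be spawned by a term $\Rec[\nil,h,{\tt DocFilter}]$, never as a singleton instruction term. Under the theorem's hypothesis every such $\Rec$-term is non-document-generating, so no transaction reached through a $\Rec$-term, and no SetField instruction whatsoever, can transitively issue a CreateDoc. Hence the ``chains of direct edges'' to CreateDoc that you worry about have length at most one: every CreateDoc executed in the chase either already sits in the initial queue or is one of the at most $N$ singleton CreateDoc terms in the single queue extension fired by an initial-queue transaction, where $N$ is the maximal number of instruction terms per queue extension (bounded by the size of $\bt$). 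This yields $|model'|\le|{\tt model}|+|{\tt queue}|\cdot N$ outright---linear, not merely polynomial---and it is also what legitimizes your recurrence, since it bounds your factor $M$ by $|{\tt model}|+N$ throughout the processing of each initial instruction. The $c^d$ blowup you fear does occur, but only for SetField instructions and transactions inside non-document-generating subtrees, which affects $n$ (allowed to be exponential) and not $|model'|$.
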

\begin{proof}
Let $N$ be the maximal number of instruction terms in a queue extension from the definition of $SetFieldTrigger$ or $ExecTrans$ functions in $\bt$. Clearly, $N$ is bounded by the size of $\bt$.

Let ${\tt model}$, ${\tt queue}$ be lists, which satisfy the conditions of the lemma, and let $k$ be the rank of instruction $t = \head({\tt queue})$ wrt $\bt, {\tt model}$. By definition, $k$ is bounded by the number of vertices in the dependency graph over $\bt$ and thus, it is bounded by the size of $\bt$. 
We show that there is a chase sequence $s_0, s_1,\ldots , s_n$ such that $s_0=\lst{{\tt model}, {\tt queue}}$, $s_n=\lst{model', \tail({\tt queue})}$, $|model'|\leqslant |{\tt model}|+N$, and $n\leqslant (N\cdot (|model|+N))^k$. Then it follows that there is a terminating chase sequence $s_0\lldots s_m$ for $s_0$, where $s_m=\lst{m,\nil}$, $|m|\leqslant |{\tt queue}|\cdot (|{\tt model}|+N)$,  and $m\leqslant |{\tt queue}|\cdot (N\cdot (|model|+N))^k$, which proves the theorem.

We use induction on $k$. Let $s_0,s_1\ldots $ be a chase sequence for $s_0$ and let $s_1=\lst{m, q}$. 

The case $k=0$ is treated like in the proof of Theorem \ref{Thm:Termination}. If $k\geqslant 1$ then $t$ is not a CreateDoc-instruction, hence, $|m|=|{\tt model}|$ and either $q=\nil$ or $q=\tail({\tt queue})$, or $q=\conc(\tail({\tt queue}), c)$, where $c=\lst{t_1\lldots$ $t_j}$, $j\geqslant 1$, is a list of instructions of rank smaller than $k$. If $t$ is a SetField-instruction then it follows from the definition of $SetFieldTrigger$ and filter functions that $j\leqslant N\cdot |{\tt model}|$ (and hence, $j\leqslant N\cdot (|{\tt model}| + N)$) and by the condition of the lemma, there is no document generating instruction in $c$. Then the induction assumption gives the required statement.

If $t$ is a transaction then for some $n_c, n_r$ such that $n_c+n_r=N$, $c$ contains at most $n_c$-many CreateDoc-instructions and in total at most $n_r\cdot |{\tt model}|$-many SetField-instructions or transactions. 

Moreover, by the definition of $SetFieldTrigger$ function and the condition of the lemma, the latter are not document generating. Hence, for any chase sequence $s_0, s_1,\ldots$ and any $s_i=\lst{m', q'}$, where $i\geqslant 1$, it holds that $|m'|\leqslant |{\tt model}|+n_c \leqslant |{\tt model}|+N$. 

Since $n_c, n_r\leqslant N$, the number of instructions in $c$ is bounded by $N\cdot |{\tt model}| +N$, which is anyway less or equal than $N\cdot (|{\tt model}| + N)$, and each of these instructions is of rank smaller than $k$. Then by the induction assumption we obtain the required statement. \QED
\end{proof}

\section{Conclusions}
We have shown that document theories (and thus, the Document Modeling approach) implement a Turing-complete computation model even in the presence of a tractable language of arithmetic operations (over document field values) and queries (for selecting collections of documents). This confirms that one of the main sources of the computational complexity are the definitions of daemons, which specify transactions and relationships between them. If the definitions are given in a way that allows for executing the same transaction or changing the value of a document field infinitely many times, then it is possible to implement computations of any Turing machine. We have shown that disallowing cyclic relationships between transactions guarantees decidability of transaction termination (importantly, cycles can be easily detected by a syntactic analysis of axioms of a document theory), but the complexity of computing effects of transactions even in this case is high, if creating documents in loops is possible. In fact, using looping in transactions is natural, since it allows for performing updates over collections of documents. If documents can be only modified in loops, but not created, then the complexity of computing effects of transactions is decreased and we have noted a case when the effects are polynomially bounded. In further research, we plan to make a more detailed complexity analysis for various (practical) restrictions on the definition of daemons. In this paper, we did not study the contribution of query languages to the complexity of computing effects of transactions and we have adopted a relatively simple query language. Since daemons employ document queries to modify collections of documents, it would be important to study the interplay between these two sources of complexity.

\bigskip


\begin{thebibliography}{999}


\bibitem{bib41}
Ershov Yu.L., Goncharov S.S., Sviridenko D.I.
Semantic Programming. {\it Information processing 86: Proc. IFIP 10th
World Comput. Congress}. 1986, vol. 10, Elsevier Sci., Dublin, 
pp. 1093--1100.

\bibitem{bib51}
Ershov Yu.L., Goncharov S.S., Sviridenko D.I.
Semantic Foundations of Programming. {\it Fundamentals of Computation
Theory: Proc. Intern. Conf}. FCT 87, Kazan, 116--122. Lect.
Notes Comp. Sci., 1987, vol. 278. 

\bibitem{bib52} 
Goncharov S.S., Sviridenko D.I. $\Sigma$-programming.
Transl. II. {\it  Amer. Math. Soc.}, 1989, no. 142, pp. 101-121.

\bibitem{bib22}
Goncharov S.S., Sviridenko D.I. $\Sigma$-programming
and its Semantics. {\it Vychisl. Systemy}, 1987, no. 120, pp. 24-51. (in
Russian).

\bibitem{bib32}
Goncharov S.S., Sviridenko D.I. Theoretical Aspects of
$\Sigma$-programming. {\it Lect. Notes Comp. Sci.}, 1986, vol. 215, pp. 169-179. 

\bibitem{condterms}
Goncharov S.S. Conditional Terms in Semantic Programming.
{\it Siberian Mathematical Journal}, 2017, vol. 58, no. 5, pp. 794-800. 

\bibitem{Doklady}
Goncharov S.S., Sviridenko D.I.
The Logic Language of Polynomial Computability. 
{\it Doklady Mathematics}, 2019, vol. 99, no.2, pp. 11–14. 

\bibitem{Recursive}
Goncharov S.S., Sviridenko D.I. Recursive Terms in Semantic Programming. 
{\it Siberian Mathematical Journal}, 2018, vol. 59, no. 6, pp. 1279-1290.

\bibitem{docmodels2} Kazakov I.A., Kustova I.A., Lazebnikova E.N.,  Mantsivoda A.V.  Building locally simple models: theory and practice. {\it The Bulletin of Irkutsk State University. Series Mathematics}, 2017, vol. 21, pp. 71-89. (in Russian). 

\bibitem{docmodels}
Malykh A.A., Mantsivoda A.V.
Document modeling.  {\it The Bulletin of Irkutsk State University. Series Mathematics}, 2017, vol. 21, pp. 89-107. (in Russian). 

\bibitem{MantsivodaPonomaryov}
Mantsivoda A.V., Ponomaryov D.K.  
A Formalization of Document Models with Semantic Modelling. 
{\it Bulletin of Irkutsk State University, Series Mathematics}, 2019. vol. 27, pp. 36-54.

\bibitem{OspichevPonomaryov}
Ospichev S., Ponomarev D.
On the Complexity of Formulas in Semantic Programming. {\it  Siberian Electronic Mathematical Reports}, 2018, vol. 15, pp. 987-995.


\bibitem{ReiterBook2}  Reiter R. {\it Knowledge in Action: Logical Foundations for Describing and Implementing Dynamical Systems}. MIT Press, 2001. 

\bibitem{Vityaev} Vityaev E.E. Semantic Probablistic Inference of Predictions. 
{\it The Bulletin of Irkutsk State University. Series Mathematics}, 2017, vol. 21, pp. 33-50. (in Russian).

\end{thebibliography}
\end{document}